\documentclass[conference]{IEEEtran}
\IEEEoverridecommandlockouts
\usepackage{cite}
\usepackage{url}
\usepackage{amsmath,amssymb,amsfonts}
\usepackage{amsthm}
\usepackage{algorithmic}
\usepackage{graphicx}
\usepackage{textcomp}
\usepackage{xcolor}
\def\BibTeX{{\rm B\kern-.05em{\sc i\kern-.025em b}\kern-.08em
    T\kern-.1667em\lower.7ex\hbox{E}\kern-.125emX}}

\newcommand\defeq{\mathrel{\stackrel{\makebox[0pt]{\mbox{\normalfont\scriptsize def}}}{:=}}}

\newcommand{\G}{\mathbf{G}}
\newcommand{\F}{\mathbf{F}}

\newcommand{\X}{\mathbf{X}}

\newtheorem{lemma}{Lemma}

\setlength{\skip\footins}{8pt}
\newcommand{\Comment}[1]{}

\begin{document}

\title{Safeguarding Learning-based Control for Smart Energy Systems with Sampling Specifications\\
\thanks{The first two authors contributed equally to this work. This work was funded by the StMWi Bayern to support the thematic development of Fraunhofer IKS. }
}

\author{\IEEEauthorblockN{Chih-Hong Cheng\IEEEauthorrefmark{1}, 
Venkatesh Prasad Venkataramanan\IEEEauthorrefmark{1},
Pragya Kirti Gupta\IEEEauthorrefmark{1}, Yun-Fei Hsu\IEEEauthorrefmark{1},  Simon Burton\IEEEauthorrefmark{1} }	\IEEEauthorblockA{\IEEEauthorrefmark{1}Fraunhofer IKS, Munich, Germany}	
}

\maketitle

\begin{abstract}
 
We study challenges using reinforcement learning in controlling energy systems, where apart from performance requirements, one has additional safety requirements such as avoiding blackouts. We detail how these  safety requirements in real-time temporal logic can be strengthened via discretization into linear temporal logic (LTL), such that the satisfaction of the LTL formulae implies the satisfaction of the original safety requirements. The discretization enables advanced engineering methods such as synthesizing shields for safe reinforcement learning as well as formal verification, where for statistical model checking, the probabilistic guarantee acquired by LTL model checking forms a lower bound for the satisfaction of the original real-time safety requirements.   

\end{abstract}

\begin{IEEEkeywords}
safety, reinforcement learning, energy grids
\end{IEEEkeywords}

\section{Introduction}

The rapid introduction of renewable energy has led to increased costs to stabilize the electricity network within countries. According to a recent report~\cite{marot2020l2rpn}, stabilization costs exceeded~1.4 billion Euros in Germany in 2017. Blackouts in recent years have adversely affected hospitals~\cite{Healthcare} and forest fires due to overloading of power lines have caused the destruction of habitat and property damage worth millions of dollars~\cite{ForestFire}.  A relatively low-cost approach to stabilizing a network is to use topological changes. However, it is very hard to simulate hundreds of options in real time, meaning that most operators revert to costly production dispatches and load disconnections. 

Stepping away from classical techniques, the use of techniques inspired by the prevalence of deep neural networks has also created great attention. Energy companies such as RTE are investigating whether \emph{reinforcement learning (RL)} can utilize cheap topological changes to the network, thereby improving performance. The fundamental challenge now comes with the problem of ensuring the safety of the energy system with learning-enabled controllers.

In this paper, we demonstrate that for the two primary energy safety requirements on avoiding line overheating and blackout specified as metric temporal logic (MTL)~\cite{koymans1990specifying} formulae~$\phi^{MTL}$, one can create strengthened linear temporal logic (LTL)~\cite{pnueli1977temporal} formulae~$\phi^{LTL, \Delta}$ such that the satisfaction of the LTL formula (where states of the plant are only sampled with a frequency of~$\frac{1}{\Delta}$) implies that the original safety specification  also holds. This altogether enables the legitimate use of techniques as defined in safe reinforcement learning~\cite{gu2022review}, including  shielding~\cite{alshiekh2018safe} or reward shaping for $\omega$-regular objectives~\cite{hahn2020faithful}, where the specification should be characterized using LTL and the training of the RL agent is based on discrete systems. We have applied the method of statistical model checking~\cite{legay2010statistical}, where the immediate corollary is that the probability of satisfaction of the strengthened LTL specification forms a \emph{lower-bound} for the satisfaction of the MTL specification.

\begin{figure}[t]
\centerline{

\includegraphics[width=0.9\columnwidth]{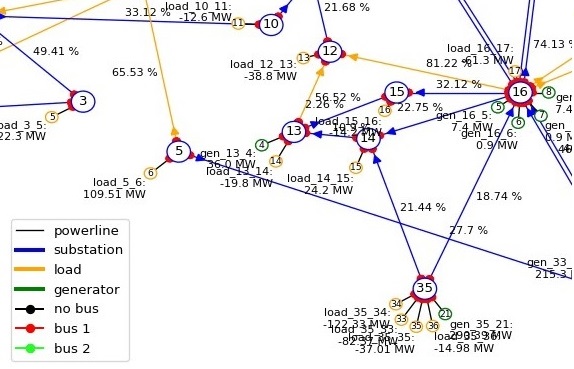}
}
\vspace{-3mm}
\caption{An energy grid model under analysis, highlighting information on production, consumption, and load of every power line.}
\vspace{-5mm}
\label{fig:energy.grid}
\end{figure}

\vspace{-1mm}
\section{Control of an Energy Grid}

For an energy grid $\mathcal{EG}$ similar to the one illustrated in Fig.~\ref{fig:energy.grid}, the overall control diagram can be understood using Fig.~\ref{fig:safety.rl.energy}, where the controller (the RL agent) periodically (with frequency of~$\frac{1}{\Delta}$) reads the following three types of input states (1) the current generation and consumption of nodes in the grid,  (2) the current load of power lines, and (3) the current topology information; it then triggers a control action that may change the topology of the grid. $\mathcal{EG}$ is, by definition, a hybrid system (cf Chap.~4 of~\cite{lee2016introduction}) where time is not discrete but rather continuous, and the change of power network topology implies the change of the system behavior. Given a real-time specification $\phi^{MTL}$ in MTL where the atomic proposition is evaluated on the (observable) state variables of the grid, denote $\mathcal{EG} \models \phi^{MTL}$ if starting from the initial state, the set of all traces satisfies $\phi^{MTL}$.  

Finally, we consider evaluating $\mathcal{EG}$ against an LTL formula~$\phi^{LTL}$ where the atomic proposition is also evaluated on the state variables of the power grid. However, for defining the initial and next state, we only consider the time where the RL agent takes action, i.e., we consider the $\Delta$-sampled infinite state sequence at time
$t = \beta \Delta$ with $\beta \in \mathbb{N}_{0}$. Denote $\mathcal{EG} \models_{\Delta} \phi^{LTL}$ if starting from the initial state, the set of all  $\Delta$-sampled infinite state sequences satisfies~$\phi^{LTL}$.

\section{Towards Sampling-based Specification}

Following the standard notation of LTL and MTL, we use $\G$, $\F$ for characterizing ``globally" and ``eventually". The symbol $\X$ is used exclusively in LTL for ``next", where $\X^j$ is the abbreviation for~$j$ consecutive~$\X$s. 

\paragraph{Specification on no overloaded power lines} For any power line~$i$, let $oload_i$ be the atomic proposition indicating if the power line is overloaded. Then, the specification of ``power line~$i$ should not be overloaded for more than $\kappa$ minutes" can be rephrased using the following MTL formula. 
\begin{equation}
   \phi^{MTL}_{bound.overload} \defeq \G (oload_i \rightarrow \F_{[0, \kappa]} \neg oload_i)
\end{equation}

Consider the system state is checked using sampling time~$\Delta$ where $\kappa > \Delta$. Then we have the following lemma stating that satisfying a strengthened LTL formula $\phi^{LTL,\Delta}_{bound.overload}$ as specified in Eq.~\ref{eq.ltl.overload} ensures  $\phi^{MTL}_{bound.overload}$ to also hold.
\begin{equation}\label{eq.ltl.overload}
   \phi^{LTL,\Delta}_{bound.overload} \defeq \G (oload_i \rightarrow \bigvee^{\lfloor \kappa / \Delta \rfloor - 1}_{j=0} \X^j \neg oload_i)
\end{equation}

\begin{lemma}\label{lemma.overload.discretization.guarantee}
For a energy grid $\mathcal{EG}$, assume that the state is measured under the sampling frequency~$\Delta$, then if $\mathcal{EG} \models_{\Delta} \phi^{LTL,\Delta}_{bound.overload}$, then $\mathcal{EG} \models \phi^{MTL}_{bound.overload}$. 
\end{lemma}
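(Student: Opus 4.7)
The plan is to verify the MTL formula pointwise. I fix an arbitrary continuous instant $t\ge 0$ at which $oload_i$ holds on the trace of $\mathcal{EG}$, and exhibit a witness $t'\in [t,t+\kappa]$ at which $\neg oload_i$ holds; every such $t'$ will be drawn from the $\Delta$-sampled timeline so that the LTL hypothesis applies directly. Writing $t_n \defeq n\Delta$ for the $n$th sampling instant, I let $k$ be the unique index with $t_k \le t < t_{k+1} = t_k + \Delta$.

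The argument then splits on whether $oload_i$ holds at $t_k$ and at $t_{k+1}$, using the two arithmetic facts $\lfloor\kappa/\Delta\rfloor\Delta \le \kappa$ and $\Delta < \kappa$. If $oload_i$ holds at $t_k$, the LTL formula at $t_k$ yields some $j^\ast\in\{0,\ldots,\lfloor\kappa/\Delta\rfloor-1\}$ with $\neg oload_i$ at $t_k+j^\ast\Delta$; the value $j^\ast=0$ is excluded because $oload_i$ holds at $t_k$, so the witness lies in $[t_k+\Delta,\,t_k+\kappa-\Delta]\subseteq [t,t+\kappa]$. If $\neg oload_i$ at $t_k$ but $oload_i$ at $t_{k+1}$, the same move at $t_{k+1}$ produces a witness in $[t_k+2\Delta,\,t_k+\kappa]\subseteq(t,t+\kappa]$. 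If $\neg oload_i$ at both samples, then $t_{k+1}$ itself is the witness, since $t<t_{k+1}\le t+\Delta<t+\kappa$.

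The main obstacle, and the reason the strengthening in Eq.~\ref{eq.ltl.overload} is non-trivial, is the off-grid case where $oload_i$ becomes true only inside the open interval $(t_k,t_{k+1})$ while $t_k$ and possibly $t_{k+1}$ remain clear: at $t_k$ the LTL antecedent is vacuously satisfied, so one must re-anchor the argument at a later sample. What makes this go through is that the LTL disjunction is deliberately truncated at $\lfloor\kappa/\Delta\rfloor-1$ rather than $\lfloor\kappa/\Delta\rfloor$, leaving a buffer of at least $\Delta$ inside the $\kappa$-budget; that buffer absorbs the drift of less than $\Delta$ between the true overload onset $t$ and the sample $t_{k+1}$ from which the LTL argument must be restarted. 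Once this budgeting is laid out, each of the three cases is a short direct calculation.
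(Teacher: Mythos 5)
Your proof is correct, and its engine is the same one the paper's sketch relies on: truncating the disjunction at $\lfloor\kappa/\Delta\rfloor-1$ rather than $\lfloor\kappa/\Delta\rfloor$ reserves a buffer of $\Delta$ inside the $\kappa$-budget, and that buffer absorbs the sub-sample offset between a true overload onset and the first sample that can see it (the paper phrases this as the duration being ``arbitrarily close to $(\lfloor\kappa/\Delta\rfloor-1)\Delta+\Delta\leq\kappa$''). Where you differ is in organization: the paper argues informally about the \emph{duration} of a detected overload episode, whereas you verify the MTL semantics pointwise, fixing an arbitrary continuous instant $t$ with $oload_i$ and exhibiting a witness for $\F_{[0,\kappa]}\neg oload_i$ drawn from the sampled grid. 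This buys you two things the sketch leaves implicit: (i) your Case 3 explicitly covers overloads that live entirely inside an inter-sample gap $(t_k,t_{k+1})$ and are never detected at all (harmless since their duration is below $\Delta<\kappa$, but a genuine case the duration argument must also silently cover), and (ii) by producing an explicit witness in $[t,t+\kappa]$ you sidestep the delicate boundary question of whether ``duration at most $\kappa$'' suffices for the closed-interval semantics of $\F_{[0,\kappa]}$. The only hygiene point worth noting is that when $\lfloor\kappa/\Delta\rfloor=1$ your Cases 1 and 2 become vacuous (the LTL hypothesis then forbids $oload_i$ at any sample, so excluding $j^\ast=0$ leaves no admissible $j^\ast$ and the case cannot arise); a one-line remark to that effect would close the argument cleanly.
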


\begin{proof}
(Sketch) We use Fig.~\ref{fig:overload.proof} to assist in understanding the key idea of the proof, where in the timeline, the dashed block represents cases where there is an overload in the considered power line. If the overload is detected at time $t$ but can not be detected at time $t+3\Delta$, the duration of power line overload can at most be $3\Delta + \Delta$. In the general case, if the overload is first detected at $t$ and disappears at~$t + (\lfloor \kappa / \Delta \rfloor -1) \Delta$, the duration of power line overload can only be arbitrarily close to $(\lfloor \kappa / \Delta \rfloor -1) \Delta + \Delta \leq \kappa$. Therefore, if  $\phi^{LTL,\Delta}_{bound.overload}$ holds, the duration of power line overloading can never exceed $\kappa$, implying that $ \phi^{MTL}_{bound.overload}$ also holds.

\end{proof} 

\begin{figure}[t]
\centerline{

\includegraphics[width=\columnwidth]{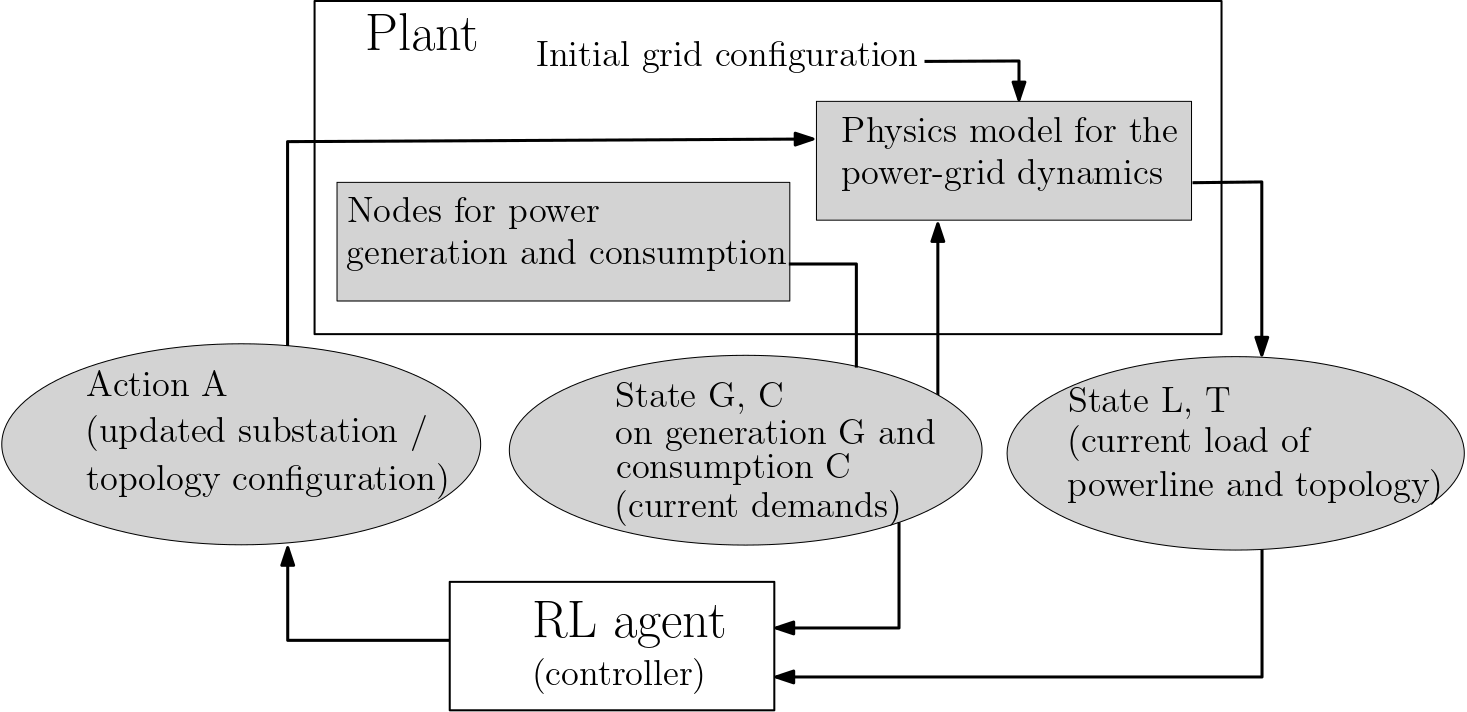}
}
\caption{Illustration on RL for controlling the energy grid.}
\label{fig:safety.rl.energy}
\vspace{-3mm}
\end{figure}

\vspace{-2mm}

\paragraph{Specification on no blackout } For any node~$i$ consuming the energy, let $blackout_i$ be the atomic proposition indicating ``blackout" occurs, i.e.,  a consumer can not receive power while its demand is non-zero. Then the specification of ``consumer~$i$ in the power grid should not have a blackout at any time" can be rephrased using the following MTL formula. 

\begin{equation}
   \phi^{MTL}_{no.blackout} \defeq \G (\neg blackout_i)
\end{equation}

Consider again the system state is checked using sampling time~$\Delta$. Provided that the blackout can not be recovered instantaneously but requires a duration larger than~$\kappa$ (which is a reasonable assumption),  the following lemma stating that satisfying a strengthened LTL formula $\phi^{LTL,\Delta}_{bound.overload}$ as specified in Eq.~\ref{eq.ltl.blackout} ensures the satisfaction of  $\phi^{MTL}_{bound.overload}$.

\vspace{-1mm}

\begin{equation}\label{eq.ltl.blackout}
   \phi^{LTL,\Delta}_{no.blackout} \defeq \G (\neg blackout_i)
\end{equation}

\begin{lemma}\label{lemma.blackout.guarantee}
For an energy grid $\mathcal{EG}$, under the assumption where when a blackout occurs, it takes at least $\gamma$ time to recover where $\gamma > \kappa$, then if $\mathcal{EG} \models_{\Delta}  \phi^{LTL,\Delta}_{no.blackout}$, then $\mathcal{EG} \models \phi^{MTL}_{no.blackout}$. 
\end{lemma}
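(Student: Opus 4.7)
The plan is to prove the contrapositive: assume that $\mathcal{EG} \not\models \phi^{MTL}_{no.blackout}$ and derive that the $\Delta$-sampled trace must violate $\phi^{LTL,\Delta}_{no.blackout}$. This is the natural direction because the LTL formula only observes a countable subset of the continuous trace, so a violation in continuous time must be ``witnessed'' at some sample point, and the recovery assumption is precisely what provides such a witness.

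Concretely, I would first invoke the hypothesis to obtain a continuous time $t^\star \geq 0$ at which $blackout_i$ holds for some consumer $i$. Let $t_0 \leq t^\star$ denote the onset time of that blackout episode. By the standing assumption that recovery from a blackout requires at least $\gamma$ time units, $blackout_i$ continuously holds throughout the interval $[t_0, t_0 + \gamma)$. The second step is to show that this interval necessarily contains a sampling instant $\beta \Delta$ for some $\beta \in \mathbb{N}_0$. Since consecutive sampling instants are spaced exactly $\Delta$ apart and $\gamma > \Delta$ (reading the lemma's $\gamma > \kappa$ assumption together with $\kappa > \Delta$ from the context of Lemma~\ref{lemma.overload.discretization.guarantee}, which ensures $\gamma$ strictly exceeds the sampling period), the half-open interval of length $\gamma$ must cover at least one point of the grid $\{\beta \Delta : \beta \in \mathbb{N}_0\}$. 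Pick the smallest such $\beta^\star$.

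The third step is then immediate: at the sample point $\beta^\star \Delta$, the atomic proposition $blackout_i$ evaluates to true, so in the $\Delta$-sampled infinite state sequence the position $\beta^\star$ satisfies $blackout_i$, i.e., $\neg blackout_i$ fails there. This falsifies $\G(\neg blackout_i)$, which is exactly $\phi^{LTL,\Delta}_{no.blackout}$, contradicting the premise $\mathcal{EG} \models_\Delta \phi^{LTL,\Delta}_{no.blackout}$.

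The main obstacle is not the logical structure of the argument, which is essentially a pigeonhole on the sampling grid, but rather pinning down the quantitative relationship between $\gamma$ and $\Delta$: the proof only goes through when a full sampling period fits inside any blackout episode, so the assumption must be read as $\gamma > \Delta$ (or at least $\gamma \geq \Delta$ with care about the endpoints). A secondary subtlety is handling the case where the blackout has already started before $t=0$; this is resolved by taking $t_0 \defeq \max(0, \text{onset})$ and observing that the initial sample at $\beta = 0$ itself already witnesses the violation.
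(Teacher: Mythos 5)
Your proposal is correct and follows essentially the same route as the paper's (much terser) proof sketch: the contrapositive argument that any blackout, persisting for at least $\gamma > \kappa > \Delta$ time units, must be witnessed at some sampling instant, thereby falsifying $\G(\neg blackout_i)$ on the $\Delta$-sampled sequence. Your explicit pigeonhole step and the observation that the assumption really needed is $\gamma > \Delta$ (obtained by chaining $\gamma > \kappa$ with $\kappa > \Delta$) just make precise what the paper leaves implicit.
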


\begin{proof}
(Sketch) We use Fig.~\ref{fig:blackout.proof} to explain the idea. Intuitively, if any blackout can not be recovered within  $\gamma$ time units, as~$\gamma$ is larger than the sampling period $\Delta$, the evaluation of the LTL formula~$\phi^{LTL,\Delta}_{no.blackout}$ is deemed to be false. Therefore, if~$\phi^{LTL,\Delta}_{no.blackout}$ holds, so does $\phi^{MTL}_{no.blackout}$.

\end{proof}

\begin{figure}[t]
\centerline{

\includegraphics[width=0.8\columnwidth]{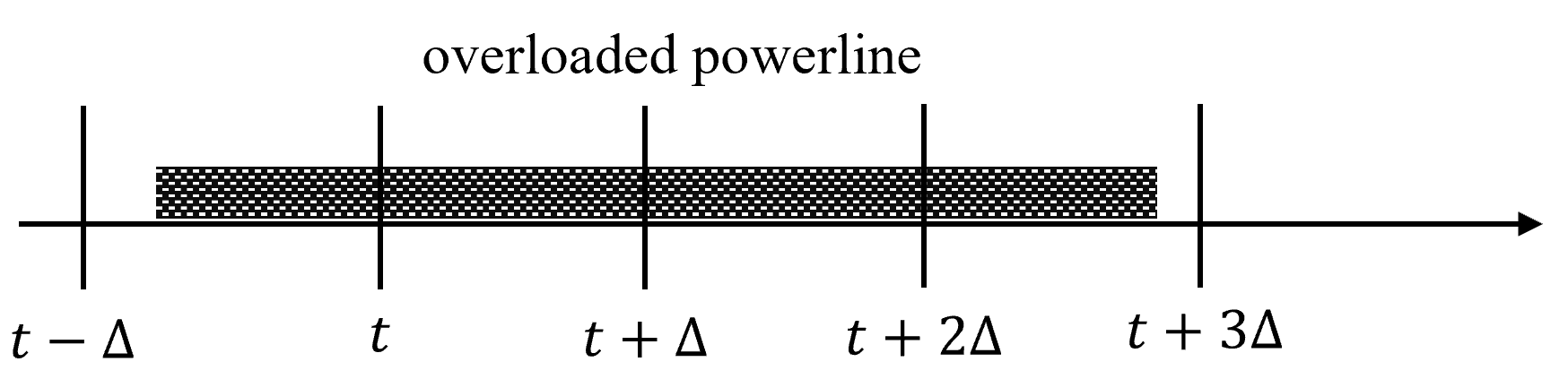}
}
\vspace{-2mm}
\caption{Illustration on the proof strategy for Lemma~\ref{lemma.overload.discretization.guarantee}}
\label{fig:overload.proof}
\vspace{-2mm}
\end{figure}

\begin{figure}[t]
\centerline{

\includegraphics[width=0.8\columnwidth]{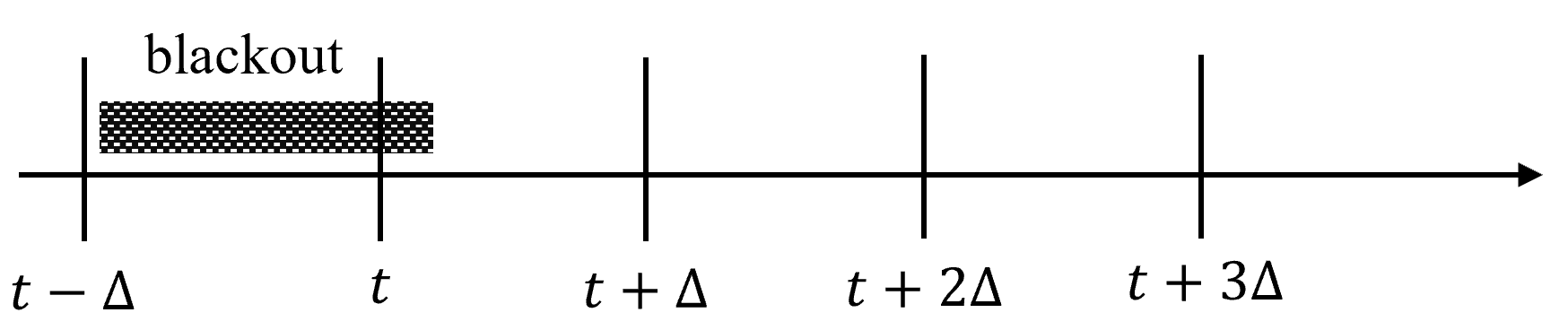}
}
\vspace{-2mm}
\caption{Illustration on the proof strategy for Lemma~\ref{lemma.blackout.guarantee}}
\label{fig:blackout.proof}
\vspace{-5mm}
\end{figure}

\vspace{-3mm}
\section{Application and Future Directions}

As a case study, we have applied statistical model checking to understand the probability of the RL-controlled energy grid model (illustrated in Fig.~\ref{fig:energy.grid}) satisfying the safety specification. The MTL formula set~$\kappa$ to be $10$ minutes, as regulated in~\cite{marot2020l2rpn}. The RL controller is operated under the configuration where~$\Delta$ equals~$5$ minutes. With the production and consumption model made available, the result demonstrated that the RL-controlled energy grid could satisfy the strengthened LTL safety property with a probability of~$0.8912$ under a  target error rate $\alpha = 0.001$. 
Via the proof of Lemma~\ref{lemma.overload.discretization.guarantee}, we know that the probability of satisfying an MTL formula is at least equal to the one of satisfying the translated LTL formula.\footnote{As an example, in Fig.~\ref{fig:overload.proof}, if $\kappa = 4\Delta - \delta$ where $\delta$ is a very small positive constant, then the scenario in Fig.~\ref{fig:overload.proof} will violate the LTL specification while satisfying the MTL specification.} 

The result from this work is our initial step towards the rigorous design of learning-enabled energy systems, where translating the MTL specification to LTL allows us to use established results in safe reinforcement learning. Future research directions include constructing interpretable shields via the sound abstraction of the plant, as well as relaxing assumptions used in analyzing the model.

\bibliographystyle{abbrv}

\end{document}